\definecolor{dkgreen}{rgb}{0,0.6,0}
\definecolor{gray}{rgb}{0.5,0.5,0.5}
\definecolor{mauve}{rgb}{0.58,0,0.82}
\newtheorem{Theorem}{Theorem}[]
\newtheorem{Lemma}[Theorem]{Lemma}
\newtheorem{Corollary}[Theorem]{Corollary}
\journal{Mathematical Biosciences}
\begin{document}

\begin{frontmatter}


\title{Precluding Oscillations in Michaelis-Menten Approximations of Dual-site Phosphorylation Systems}



\author{Hwai-Ray Tung}

\address{Brown University}

\begin{abstract}
Oscillations play a major role in a number of biological systems, from predator-prey models of ecology to circadian clocks. In this paper we focus on the question of whether oscillations exist within dual-site phosphorylation systems. Previously, Wang and Sontag showed, using monotone systems theory, that the Michaelis-Menten (MM) approximation of the distributive and sequential dual-site phosphorylation system lacks oscillations. However, biological systems are generally not purely distributive; there is generally some processive behavior as well. Accordingly, this paper focuses on the MM approximation of a general sequential dual-site phosphorylation system that contains both processive and distributive components, termed the composite system. Expanding on the methods of Bozeman and Morales, we preclude oscillations in the MM approximation of the composite system. This implies the lack of oscillations in the MM approximations of the processive and distributive systems, shown previously, as well as in the MM approximation of the partially processive and partially distributive mixed-mechanism system.
\end{abstract}

\begin{keyword}
Michaelis-Menten \sep Oscillations \sep Phosphorylation \sep Dulac's Criterion


\end{keyword}

\end{frontmatter}


\section{Introduction}
Oscillations appear in a number of biological systems, including predator-prey models, the lighting up of fireflies \citep{strogatz2014nonlinear}, auditory hair bundles, and cytoskeletal structures. Within the context of biochemistry, much of the study of oscillations centers around genetic oscillations, which have been shown to play a role in circadian clocks, the segmentation of vertebrates, and the activity of the tumor suppressor gene p53 \citep{oscInBio}. Oscillations also appear in dual-site phosphorylation networks, the networks of interest in this paper. 

Phosphorylation networks can modify, activate, or deactivate proteins and have been observed in membrane receptors, protein kinases, transcription factors, cell cycle regulators, and circadian clock proteins, to name a few. Dual-site phosphorylation networks modify proteins by allowing at most one phosphate group to bind onto each of the two possible sites on the protein. In allowing multiple phosphate groups to attach, a protein can exhibit multiple types of behavior \cite{phosBackgrnd}.

The importance of oscillations in phosphorylation networks can be seen from the \textbf{mitogen-activated protein kinase (MAPK) cascades}. These cell signalling networks govern, among others, gene expression, cell proliferation, cell survival and death, and cell motility \cite{chang-karin} and often include dual-site phosphorylation networks \citep{phosInMAPK}. The sustained oscillations observed in MAPK cascades may be what enables a cell to time its stages of development  \citep{yeast-mapk-oscillations,oscillations-mapk-cancer}.

In order to simultaneously examine a wide variety of dual-site phosphorylation networks, we focus on the \textbf{processive-distributive composite} network, or \textbf{composite} network:

\makeatletter
\definearrow1{s>}{%
\ifx\@empty#1\@empty
\expandafter\draw\expandafter[\CF@arrow@current@style,-CF](\CF@arrow@start@node)--(\CF@arrow@end@node);%
\else
\def\curvedarrow@style{shorten <=\CF@arrow@offset,shorten >=\CF@arrow@offset,}%
\CF@expadd@tocs\curvedarrow@style\CF@arrow@current@style
\expandafter\draw\expandafter[\curvedarrow@style,-CF](\CF@arrow@start@name)..controls#1..(\CF@arrow@end@name);
\fi
}

\vspace{10pt}

\addtocounter{equation}{1}

\begin{center}
\schemestart
$S_0 + E$
 \arrow{<=>[$k_1$][$k_2$]}
$S_0E$
 \arrow{->[$k_3$]}
$S_1 + E$
 \arrow{<=>[$k_4$][$k_5$]}
$S_1E$
 \arrow{->[$k_6$]} 
 \arrow(@c2--@c4){s>[+(60:1)and+(120:1)][$k_7$]}[,2]
$S_2 + E$
\schemestop
\schemestart
$S_2 + F$
 \arrow{<=>[$\ell_1$][$\ell_2$]}
$S_2F$
 \arrow{->[$\ell_3$]}
$S_1 + F$
 \arrow{<=>[$\ell_4$][$\ell_5$]}
$S_1F$
 \arrow{->[$\ell_6$]} 
 \arrow(@c2--@c4){s>[+(-60:1)and+(-120:1)][$\ell_7$]}[,2]
$S_0 + F$
\schemestop
\chemmove{
  \filldraw[black] (-6.69,-.6) circle (0pt) node[anchor=north] {$\ell_7$};
  \filldraw[black] (-6.69,1.8) circle (0pt) node[anchor=south] {$k_7$};
  \filldraw[black] (.8,.5) circle (0pt) node[anchor=west] {$(\arabic{equation})$};
}
\end{center}

\addtocounter{equation}{-1}

\begin{equation}
    \color{white}
  \label{chemRxn}
\end{equation}

\vspace{-25pt}

\noindent This network was first explored by Suwanmajo and Krishnan under the name ``mixed composite phosphorylation model" and was shown to exhibit (under certain conditions) oscillations \cite{mixedNet}. It also encompasses three common dual-site phosphorylation networks we will later encounter.

The most studied network encompassed by the composite network is the network with a sequential and \textbf{distributive} mechanism, displayed in \eqref{distChemRxn}.

\begin{equation}
\label{distChemRxn}
\begin{gathered}
S_0 + E \xrightleftharpoons[k_2]{k_1} S_0E \xrightarrow{k_3} S_1 + E \xrightleftharpoons[k_5]{k_4} S_1E \xrightarrow{k_6} S_2 + E \\
S_2 + F \xrightleftharpoons[\ell_2]{\ell_1} S_2F \xrightarrow{\ell_3} S_1 + F \xrightleftharpoons[\ell_5]{\ell_4} S_1F \xrightarrow{\ell_6} S_0 + F
\end{gathered}
\end{equation}

\noindent Note that \eqref{distChemRxn} can be derived from \eqref{chemRxn} by setting $k_7 = \ell_7 = 0$. The mechanism is called sequential since the phosphate groups attach to the binding sites in a certain sequence; the phosphate group will always bind to site $1$ before site $2$. The mechanism is called distributive since $S_1E$ has the choice of distributing to $S_2+E$ or back to $S_1+E$, and similarly for $S_1F$. It is unknown whether network \eqref{distChemRxn} exhibits oscillations, although Wang and Sontag showed through monotone systems theory that its Michaelis-Menten (MM) approximation does not exhibit oscillations \cite{Wang2008}. The same result was shown by Bozeman and Morales using the simpler Dulac's criterion \cite{bozeMoral2016}.

Another mechanism is the \textbf{processive} mechanism. A dual-site phosphorylation network with sequential and processive mechanism can be seen in \eqref{procChemRxn}. 

\begin{equation}
\label{procChemRxn}
\begin{gathered}
S_0 + E \xrightleftharpoons[k_2]{k_1} S_0E \xrightarrow{k_7} S_1E \xrightarrow{k_6} S_2 + E \\
S_2 + F \xrightleftharpoons[\ell_2]{\ell_1} S_2F \xrightarrow{\ell_7} S_1F \xrightarrow{\ell_6} S_0 + F
\end{gathered}
\end{equation}

\noindent We can obtain \eqref{procChemRxn} from \eqref{chemRxn} by setting rate constants $k_3=k_4=k_5 = \ell_3=\ell_4=\ell_5 = 0$. The mechanism is called processive since $S_1E$, unlike in a distributive mechanism, can only proceed to $S_2+E$, and similarly for $S_1F$. Processive systems converge to a unique equilibrium and thus do not exhibit oscillations \citep{processiveConverge, eithun2017, Rao2017}.

In reality, dual-site phosphorylation networks are generally neither purely distributive nor purely processive. Thus, this paper will focus on the existence of oscillations in the MM approximation of the composite system \eqref{chemRxn}, which we saw specializes to the distributive and processive systems \eqref{distChemRxn} and \eqref{procChemRxn} respectively. Section 2 details the process of obtaining the MM approximation. Section 3 precludes oscillations for the composite system \eqref{chemRxn}, which implies two previously shown results, the lack of oscillations in the MM approximations of the distributive system and the processive system. In addition, this precludes oscillations in the MM approximation of the \textbf{mixed-mechanism system}, a partially processive and partially distributive network seen later in \eqref{coro3}. Interestingly, the original mixed-mechanism network is able to oscillate \cite{mixedNet}. 

\section{Michaelis-Menten System and Reduction}
This work focuses on the composite network \eqref{chemRxn}, and this section will closely match the methodology of Bozeman and Morales \cite{bozeMoral2016}. For convenience, we will refer to the compounds $S_0E, S_1E, S_2F,$ and $S_1F$ as $C_1, C_2, C_3,$ and $C_4$, respectively. We also use square brackets to denote the concentration of the corresponding substance. Then, using mass action kinetics we arrive at the following system of differential equations:

\begin{subequations}
\label{unreduced}
\begin{align}
    \frac{d[S_0]}{dt} &= \ell_6[C_4] - k_1[S_0][E]+k_2[C_1], \label{eqS0}\\
    \frac{d[S_2]}{dt} &= k_6[C_2] - \ell_1[S_2][F]+\ell_2[C_3], \label{eqS2}\\
    \frac{d[S_1]}{dt} &= k_3[C_1] - k_4[S_1][E]+k_5[C_2] + \ell_3[C_3] + \ell_5[C_4] - \ell_4[S_1][F], \label{del1}\\
    \frac{d[E]}{dt} &= (k_2+k_3)[C_1] + (k_5+k_6)[C_2] -k_1[S_0][E]-k_4[S_1][E], \label{del2}\\
    \frac{d[F]}{dt} &= (\ell_2+\ell_3)[C_3] + (\ell_5+\ell_6)[C_4] -\ell_1[S_2][F]-\ell_4[S_1][F], \label{del3}\\
    \frac{d[C_1]}{dt} &= k_1[S_0][E] - (k_2+k_3)[C_1] -k_7[C_1], \label{eqC1}\\
    \frac{d[C_2]}{dt} &= k_4[S_1][E] - (k_5+k_6)[C_2] +k_7[C_1], \label{eqC2}\\
    \frac{d[C_3]}{dt} &= \ell_1[S_2][F] - (\ell_2+\ell_3)[C_3] -\ell_7[C_3], \label{eqC3}\\
    \frac{d[C_4]}{dt} &= \ell_4[S_1][F] - (\ell_5+\ell_6)[C_4] +\ell_7[C_3].\label{eqC4}
\end{align}
\end{subequations}

Further examining the composite network \eqref{chemRxn} or the differential equations \eqref{unreduced}, we find that the total amounts of substrate, enzyme $E$, and enzyme $F$, respectively, are conserved, a consequence of the conservation of matter. This gives the conservation laws 

\begin{subequations}
\label{conserv}
\begin{align}
S_T &= [S_0] + [S_1] + [S_2] + [C_1]+[C_2]+[C_3]+[C_4], \label{conserv1}\\
E_T &= [E] + [C_1]+[C_2], \label{conserv2}\\
F_T &= [F] + [C_3]+[C_4].\label{conserv3}
\end{align}
\end{subequations}

Note that we may deduce equation \eqref{del1} by taking the derivative of equation \eqref{conserv1} with respect to time and substituting in equations \eqref{eqS0}, \eqref{eqS2}, \eqref{eqC1}, \eqref{eqC2}, \eqref{eqC3}, and \eqref{eqC4}. Similarly, we may deduce \eqref{del2} from equations \eqref{conserv2}, \eqref{eqC1}, and \eqref{eqC2}, and we may deduce \eqref{del3} from equations \eqref{conserv3}, \eqref{eqC3}, and \eqref{eqC4}. As such, we replace \eqref{del1}, \eqref{del2}, and \eqref{del3} with the conservation equations \eqref{conserv}, reducing the number of differential equations to six. Now, we use Michaelis-Menten theory to reduce our system. We scale all variables other than $S_0, S_1, S_2, S_T$ by $\varepsilon$ to get

\begin{equation}
\label{scaleEpsil}
E_T = \varepsilon \widetilde{E_T}, \quad F_T = \varepsilon \widetilde{F_T}, \quad [C_i] = \varepsilon [\widetilde{C_i}], \quad [E] = \varepsilon [\widetilde{E}], \quad [F] = \varepsilon [\widetilde{F}], \quad \tau = \varepsilon t.
\end{equation}

The idea behind the scaling is the assumption that the concentration of intermediates is low relative to the substrates. Substituting in the scalings from \eqref{scaleEpsil}, we obtain

\begin{equation}
    \label{subepsilon}
    \begin{aligned}
        \frac{d[S_0]}{d\tau} &= \ell_6[\widetilde{C_4}] - k_1[S_0][\widetilde{E}]+k_2[\widetilde{C_1}], \\
        \frac{d[S_2]}{d\tau} &= k_6[\widetilde{C_2}] - \ell_1[S_2][\widetilde{F}]+\ell_2[\widetilde{C_3}], \\
        \varepsilon\frac{d[\widetilde{C_1}]}{d\tau} &= k_1[S_0][\widetilde{E}] - (k_2+k_3)[\widetilde{C_1}] -k_7[\widetilde{C_1}],\\
        \varepsilon\frac{d[\widetilde{C_2}]}{d\tau} &= k_4[S_1][\widetilde{E}] - (k_5+k_6)[\widetilde{C_2}] +k_7[\widetilde{C_1}],\\
        \varepsilon\frac{d[\widetilde{C_3}]}{d\tau} &= \ell_1[S_2][\widetilde{F}] - (\ell_2+\ell_3)[\widetilde{C_3}] -\ell_7[\widetilde{C_3}],\\
        \varepsilon\frac{d[\widetilde{C_4}]}{d\tau} &= \ell_4[S_1][\widetilde{F}] - (\ell_5+\ell_6)[\widetilde{C_4}] +\ell_7[\widetilde{C_3}],\\
    \end{aligned}
\end{equation}

\noindent with new conservation equations

\begin{equation}
    \label{subepsilonConserv}
    \begin{aligned}
        S_T &= [S_0] + [S_1] + [S_2] +  \varepsilon[\widetilde{C_1}]+ \varepsilon[\widetilde{C_2}]+ \varepsilon[\widetilde{C_3}]+ \varepsilon[\widetilde{C_4}], \\
        \widetilde{E_T} &= [\widetilde{E}] + [\widetilde{C_1}]+[\widetilde{C_2}], \\
        \widetilde{F_T} &= [\widetilde{F}] + [\widetilde{C_3}]+[\widetilde{C_4}].
    \end{aligned}
\end{equation}

Now, we approximate the system by setting $\varepsilon = 0$ in \eqref{subepsilon} and \eqref{subepsilonConserv}. This yields the new differential equations for the substrates

\begin{equation}
    \label{subepsilonZeroSubstrate}
    \begin{aligned}
        \frac{d[S_0]}{d\tau} &= \ell_6[\widetilde{C_4}] - k_1[S_0][\widetilde{E}]+k_2[\widetilde{C_1}], \\
        \frac{d[S_2]}{d\tau} &= k_6[\widetilde{C_2}] - \ell_1[S_2][\widetilde{F}]+\ell_2[\widetilde{C_3}],
    \end{aligned}
\end{equation}

\noindent the new equations for the intermediates

\begin{equation}
    \label{subepsilonZeroIntermeds}
    \begin{aligned}
        0 &= k_1[S_0][\widetilde{E}] - (k_2+k_3)[\widetilde{C_1}] -k_7[\widetilde{C_1}],\\
        0 &= k_4[S_1][\widetilde{E}] - (k_5+k_6)[\widetilde{C_2}] +k_7[\widetilde{C_1}],\\
        0 &= \ell_1[S_2][\widetilde{F}] - (\ell_2+\ell_3)[\widetilde{C_3}] -\ell_7[\widetilde{C_3}],\\
        0 &= \ell_4[S_1][\widetilde{F}] - (\ell_5+\ell_6)[\widetilde{C_4}] +\ell_7[\widetilde{C_3}],\\
    \end{aligned}
\end{equation}

\noindent and the new conservation equations

\begin{equation}
    \label{subepsilonZeroConserv}
    \begin{aligned}
        \widetilde{S_T} &= [S_0] + [S_1] + [S_2],\\
        \widetilde{E_T} &= [\widetilde{E}] + [\widetilde{C_1}]+[\widetilde{C_2}], \\
        \widetilde{F_T} &= [\widetilde{F}] + [\widetilde{C_3}]+[\widetilde{C_4}].
    \end{aligned}
\end{equation}

From \eqref{subepsilonZeroIntermeds} we obtain the equations

\begin{equation}
  \label{solveCi}
  \begin{split}
    [\widetilde{C_1}] &= \frac{k_1}{k_2+k_3+k_7} [S_0][\widetilde{E}],\\
    [\widetilde{C_2}] &= \frac{k_4[S_1][\widetilde{E}]+k_7[\widetilde{C_1}]}{k_5+k_6}\\
                      &= \frac{k_4[S_1][\widetilde{E}]}{k_5+k_6}+\frac{k_1k_7[S_0][\widetilde{E}]}{(k_5+k_6)(k_2+k_3+k_7)},
  \end{split}
\text{     }
  \begin{split}
    [\widetilde{C_3}] &= \frac{\ell_1}{\ell_2+\ell_3+\ell_7} [S_2][\widetilde{F}],\\
    [\widetilde{C_4}] &= \frac{\ell_4[S_1][\widetilde{F}]+\ell_7[\widetilde{C_3}]}{\ell_5+\ell_6}\\
                      &= \frac{\ell_4[S_1][\widetilde{F}]}{\ell_5+\ell_6}+\frac{\ell_1\ell_7[S_2][\widetilde{F}]}{(\ell_5+\ell_6)(\ell_2+\ell_3+\ell_7)}.
  \end{split}
\end{equation}

Substituting the expressions for $[\widetilde{C_i}]$ in \eqref{solveCi} into \eqref{subepsilonZeroConserv} gives

\begin{equation}
    \label{ETFT}
    \begin{aligned}
        \widetilde{E_T} &= [\widetilde{E}]\left[1 + \frac{k_1}{k_2+k_3+k_7} [S_0] + \frac{k_4[S_1]}{k_5+k_6}+\frac{k_1k_7[S_0]}{(k_5+k_6)(k_2+k_3+k_7)} \right],\\
        \widetilde{F_T} &= [\widetilde{F}]\left[1 + \frac{\ell_1}{\ell_2+\ell_3+\ell_7} [S_2] + \frac{\ell_4[S_1]}{\ell_5+\ell_6}+\frac{\ell_1\ell_7[S_2]}{(\ell_5+\ell_6)(\ell_2+\ell_3+\ell_7)} \right].
    \end{aligned}
\end{equation}

Substituting \eqref{solveCi} and \eqref{ETFT} into \eqref{subepsilonZeroSubstrate} and reparametrizing as in \eqref{newVars} yields the system

\begin{equation}
\label{reducedNearly}
\begin{aligned}
\frac{d[S_0]}{d\tau} &= \frac{a_1 [S_1] [\widetilde{F_T}]}{1+c_2[S_1]+d_1[S_2]} + \frac{a_2 [S_2] [\widetilde{F_T}]}{1+c_2[S_1]+d_1[S_2]} - \frac{a_3 [S_0] [\widetilde{E_T}]}{1+b_1[S_0]+c_1[S_1]},\\
\frac{d[S_2]}{d\tau} &= \frac{a_4 [S_1] [\widetilde{E_T}]}{1+b_1[S_0]+c_1[S_1]} + \frac{a_5 [S_0] [\widetilde{E_T}]}{1+b_1[S_0]+c_1[S_1]} - \frac{a_6 [S_2] [\widetilde{F_T}]}{1+c_2[S_1]+d_1[S_2]}, \\
\end{aligned}
\end{equation}

\noindent where

\begin{equation}
  \label{newVars}
  \begin{split}
    b_1 &= \frac{k_1(k_5+k_6+k_7)}{(k_2+k_3+k_7)(k_5+k_6)}, \\
    c_1 &= \frac{k_4}{k_5+k_6}, \\
    a_1 &= \frac{\ell_4\ell_6}{\ell_5+\ell_6}, \\
    a_2 &= \frac{\ell_1\ell_6\ell_7}{(\ell_2+\ell_3+\ell_7)(\ell_5+\ell_6)}, \\
    a_3 &= \frac{k_1(k_3+k_7)}{(k_2+k_3+k_7)}, \\
  \end{split}
\quad \quad \quad
  \begin{split}
    d_1 &= \frac{\ell_1(\ell_5+\ell_6+\ell_7)}{(\ell_2+\ell_3+\ell_7)(\ell_5+\ell_6)}, \\
    c_2 &= \frac{\ell_4}{\ell_5+\ell_6}, \\
    a_4 &= \frac{k_4k_6}{k_5+k_6}, \\
    a_5 &= \frac{k_1k_6k_7}{(k_2+k_3+k_7)(k_5+k_6)}, \\
    a_6 &= \frac{\ell_1(\ell_3+\ell_7)}{(\ell_2+\ell_3+\ell_7)},
  \end{split}
\end{equation}

\noindent with conservation equation 

\begin{equation}
\label{reducedNearlyConserv}
\widetilde{S_T} = [S_0] + [S_1] + [S_2].
\end{equation}

From \eqref{reducedNearlyConserv}, we obtain $[S_1] = \widetilde{S_T} - [S_0] - [S_2]$. Substituting this into \eqref{reducedNearly} yields a system of only two variables, as seen below. This system is our \textbf{MM approximation}.

\begin{equation}
\label{reduced}
\begin{aligned}
\frac{d[S_0]}{d\tau} &= \frac{(a_1 (\widetilde{S_T} - [S_0] - [S_2])+a_2 [S_2]) [\widetilde{F_T}]}{1+c_2(\widetilde{S_T} - [S_0] - [S_2])+d_1[S_2]} - \frac{a_3 [S_0] [\widetilde{E_T}]}{1+b_1[S_0]+c_1(\widetilde{S_T} - [S_0] - [S_2])},\\
\frac{d[S_2]}{d\tau} &= \frac{(a_4 (\widetilde{S_T} - [S_0] - [S_2]) + a_5 [S_0]) [\widetilde{E_T}]}{1+b_1[S_0]+c_1(\widetilde{S_T} - [S_0] - [S_2])} - \frac{a_6 [S_2] [\widetilde{F_T}]}{1+c_2(\widetilde{S_T} - [S_0] - [S_2])+d_1[S_2]}.
\end{aligned}
\end{equation}

\section{Precluding Oscillations in the MM Approximation}
To analyze \eqref{reduced}, we use Dulac's Criterion, which is also called the Bendixson-Dulac theorem and Bendixson's criterion. 

\begin{Lemma}(Dulac's Criterion)
\label{dulac}
Let $\frac{dx}{dt} = f(x, y)$ and $\frac{dy}{dt} = g(x, y)$ be a system of ODEs defined on a simply connected region $D$ of $\mathbb{R}^2$. If $\frac{\partial f}{\partial x} + \frac{\partial g}{\partial y} $ is always positive or always negative on the region $D$, the system does not exhibit oscillations contained in region $D$.
\end{Lemma}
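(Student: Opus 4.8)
The plan is to argue by contradiction using Green's theorem (equivalently, the two‑dimensional divergence theorem). Suppose, contrary to the conclusion, that the system $\frac{dx}{dt}=f(x,y)$, $\frac{dy}{dt}=g(x,y)$ has a periodic orbit lying entirely in $D$; call it $\gamma$, with period $T$, traced by a solution $(x(t),y(t))$. A nonconstant periodic orbit of a planar $C^1$ system is a simple closed (Jordan) curve, so by the Jordan curve theorem it bounds a region; let $R$ denote the bounded open component. The key topological input is that $D$ is simply connected and $\gamma\subseteq D$, which forces $\overline{R}\subseteq D$; hence $\varphi:=\frac{\partial f}{\partial x}+\frac{\partial g}{\partial y}$ is defined and of one fixed strict sign throughout $R$.

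Next I would apply Green's theorem to the vector field $(-g,f)$ on $R$, orienting $\partial R=\gamma$ consistently, to obtain
\[
\iint_R\left(\frac{\partial f}{\partial x}+\frac{\partial g}{\partial y}\right)dA=\oint_{\gamma}\left(f\,dy-g\,dx\right).
\]
Then I would evaluate the right-hand side using that $\gamma$ is a trajectory, so along $\gamma$ one has $dx=f(x,y)\,dt$ and $dy=g(x,y)\,dt$, giving
\[
\oint_{\gamma}\left(f\,dy-g\,dx\right)=\int_0^T\bigl(f\cdot g-g\cdot f\bigr)\,dt=0 .
\]
(If the flow traverses $\gamma$ with the orientation opposite to the standard one for $R$, this only changes an overall sign, and the value is still $0$.) Combining the two displays yields $\iint_R\varphi\,dA=0$.

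To finish, I would note that $R$ is a nonempty open subset of $\mathbb{R}^2$, hence has positive area, while $\varphi$ is continuous and everywhere of the same strict sign on $R$; therefore $\iint_R\varphi\,dA$ is strictly positive (or strictly negative), contradicting the equality above. Consequently no such periodic orbit can exist, which is exactly the claim of the lemma. The same computation rules out any closed trajectory of the flow; with minor extra bookkeeping it also excludes polycycles built from finitely many trajectories, but the periodic-orbit case is all that is needed in what follows.

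The only genuinely delicate steps are the topological ones, and I would state them explicitly rather than gloss over them: first, that a periodic orbit is a Jordan curve so that ``the region it bounds'' is well defined; and second, that simple connectedness of $D$ is precisely what guarantees that this bounded region stays inside $D$, so that $\varphi$ retains its sign there and Green's theorem is applicable on a subdomain of $D$. Everything after that is the routine observation that the $1$-form $f\,dy-g\,dx$ annihilates the velocity field $(f,g)$, hence vanishes identically along any trajectory.
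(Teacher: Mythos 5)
The paper does not prove this lemma at all---it is stated as a known classical result (the Bendixson--Dulac theorem, here with Dulac function $\equiv 1$) and simply applied. Your Green's theorem argument is the standard textbook proof of exactly that statement, and it is correct: the contradiction between $\oint_\gamma (f\,dy - g\,dx) = 0$ along a trajectory and the strict sign of $\iint_R (\partial_x f + \partial_y g)\,dA$ is the right mechanism, and you correctly isolate the two points that actually need care (that a periodic orbit is a Jordan curve, and that simple connectedness of $D$ is what puts the enclosed region inside $D$ so the sign hypothesis applies there). The only implicit assumption worth flagging is that $f$ and $g$ are $C^1$, which is needed both for uniqueness of solutions (so the orbit is simple) and for Green's theorem; the lemma as stated leaves this tacit, and it holds for the rational vector field the paper applies it to.
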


We are now ready to prove the main result of this paper, Theorem \ref{mainRes1}.

\begin{Theorem}
\label{mainRes1}
Let $a_1 + a_3 + a_4 + a_6 > 0$. Then the MM approximation of the composite system \eqref{reduced} cannot oscillate.
\end{Theorem}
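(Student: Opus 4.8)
The plan is to invoke Lemma~\ref{dulac} in its unweighted (constant Dulac function) form on the triangle
\[
D = \{\, ([S_0],[S_2]) : [S_0]\ge 0,\ [S_2]\ge 0,\ [S_0]+[S_2]\le \widetilde{S_T} \,\},
\]
which is convex, hence simply connected, and to which the conservation law~\eqref{reducedNearlyConserv} confines the dynamics. Throughout I would write $x=[S_0]$, $y=[S_2]$, $z=\widetilde{S_T}-x-y$ (the stand-in for $[S_1]$, nonnegative on $D$), and abbreviate the two Michaelis--Menten denominators appearing in~\eqref{reduced} as $P=1+b_1x+c_1z$ and $Q=1+c_2z+d_1y$. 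Since $b_1,c_1,c_2,d_1\ge 0$ we have $P,Q\ge 1$ on $D$, so the right-hand side of~\eqref{reduced} is a smooth vector field there and Lemma~\ref{dulac} is applicable.

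The first step is to compute the divergence. Writing the two right-hand sides of~\eqref{reduced} as $f$ and $g$, using $\partial z/\partial x = \partial z/\partial y = -1$, and applying the quotient rule, the sum $\frac{\partial f}{\partial x}+\frac{\partial g}{\partial y}$ should organize into a $1/P^{2}$ piece and a $1/Q^{2}$ piece,
\[
\frac{\partial f}{\partial x} + \frac{\partial g}{\partial y} = \frac{\widetilde{E_T}}{P^{2}}\,\Phi_E + \frac{\widetilde{F_T}}{Q^{2}}\,\Phi_F .
\]
After substituting $P=1+b_1x+c_1z$, $Q=1+c_2z+d_1y$ and collecting monomials --- with the $a_4$-contributions cancelling in the $z$-coefficient of $\Phi_E$ and the $a_1$-contributions cancelling in the $z$-coefficient of $\Phi_F$ --- the computation should leave
\[
\Phi_E = -(a_3+a_4) - \bigl(a_4 b_1 + c_1(a_3-a_5)\bigr)x - a_3 c_1 z,
\]
\[
\Phi_F = -(a_1+a_6) - a_6 c_2 z - \bigl(a_1 d_1 + a_6 c_2 - a_2 c_2\bigr)y .
\]

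It then remains to check $\Phi_E,\Phi_F\le 0$ on $D$. Because $x,y,z\ge 0$ and every $a_i,b_1,c_1,c_2,d_1$ is nonnegative, the only coefficients whose sign is not immediate are $a_4 b_1 + c_1(a_3-a_5)$ and $a_1 d_1 + a_6 c_2 - a_2 c_2$; I would show both are $\ge 0$ via the elementary inequalities $a_3\ge a_5$ and $a_1 d_1\ge a_2 c_2$, which follow from~\eqref{newVars}: $a_3/a_5 = (k_3+k_7)(k_5+k_6)/(k_6 k_7)\ge 1$ (with $a_5=0$ when $k_7=0$), and $a_1 d_1/(a_2 c_2) = (\ell_5+\ell_6+\ell_7)/\ell_7\ge 1$ (with $a_2=0$ when $\ell_7=0$). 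Hence $\frac{\partial f}{\partial x}+\frac{\partial g}{\partial y}\le 0$ on $D$. Finally, since all $a_i\ge 0$, the hypothesis $a_1+a_3+a_4+a_6>0$ forces $a_1+a_6>0$ or $a_3+a_4>0$; in the first case $\Phi_F\le -(a_1+a_6)<0$ everywhere on $D$, in the second $\Phi_E\le -(a_3+a_4)<0$ everywhere, and since $\widetilde{E_T},\widetilde{F_T}>0$ and $P,Q>0$ we obtain $\frac{\partial f}{\partial x}+\frac{\partial g}{\partial y}<0$ throughout $D$, so Lemma~\ref{dulac} precludes oscillations. The one delicate point is the divergence computation itself: one must expand the two denominators and verify that every surviving coefficient is $\le 0$ --- it is exactly these cancellations that make the plain constant weight $\varphi\equiv 1$ work, so no engineered Dulac function is needed; the two rate-constant inequalities and the case split for strictness are routine.
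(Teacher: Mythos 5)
Your proposal is correct and follows essentially the same route as the paper: Dulac's criterion with the constant weight on the triangle $D$, the divergence split into a $1/P^2$ and a $1/Q^2$ piece, the sign of each coefficient checked via the rate-constant formulas in \eqref{newVars}, and strictness extracted from $a_1+a_3+a_4+a_6>0$. The only cosmetic differences are that the paper keeps the factors $\widetilde{S_T}-[S_2]$ and $\widetilde{S_T}-[S_0]$ unexpanded and verifies $a_4b_1\ge a_5c_1$ and $a_1d_1\ge a_2c_2$ directly, whereas you expand fully in $x,y,z$ and use the equivalent inequality $a_3\ge a_5$ for the $E$-term; both are valid.
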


\begin{proof}
Following the notation in Dulac's Criterion, let $f([S_0], [S_2]) \coloneqq \frac{d[S_0]}{d\tau}$, $g([S_0], [S_2]) \coloneqq \frac{d[S_2]}{d\tau}$, and $H \coloneqq \frac{\partial f}{\partial [S_0]} + \frac{\partial g}{\partial [S_2]}$ on the region $D \coloneqq \{([S_0], [S_2]) \in \mathbb{R}^2 \quad | \quad [S_0] \geq 0, [S_2] \geq 0, [S_0] + [S_2] \leq \widetilde{S_T}\}$. Then 

\begin{equation}
\label{dfdxdgdy}
\begin{aligned}
\frac{\partial f}{\partial [S_0]} = &-E_T\frac{a_3c_1(\widetilde{S_T}-[S_2]) + a_3}{(b_1[S_0] + c_1(\widetilde{S_T} - [S_0]-[S_2]) + 1)^2} \\
&- F_T\frac{(a_1d_1-a_2c_2)[S_2] + a_1}{(d_1[S_2] + c_2(\widetilde{S_T} - [S_0] - [S_2]) +1)^2},\\
\frac{\partial g}{\partial [S_2]} = &-E_T\frac{(a_4b_1- a_5c_1)[S_0] + a_4}{(b_1[S_0] + c_1(\widetilde{S_T} - [S_0]-[S_2]) + 1)^2} \\
&- F_T\frac{a_6c_2(\widetilde{S_T} - [S_0]) + a_6}{(d_1[S_2]+ c_2(\widetilde{S_T}- [S_0] - [S_2])+1)^2}. \\
\end{aligned}
\end{equation}

\noindent From equations \eqref{newVars}, it follows that

\begin{equation}
\label{subOldVarsBack}
\begin{aligned}
a_1d_1-a_2c_2 &= \frac{\ell_1 \ell_4 \ell_6 (\ell_5+\ell_6)}{(\ell_2 + \ell_3+\ell_7)(\ell_5+\ell_6)^2} \geq 0,\\
a_4b_1- a_5c_1 &= \frac{k_1k_4k_6(k_5 + k_6)}{(k_2+k_3+k_7)(k_5+k_6)^2} \geq 0. \\
\end{aligned}
\end{equation}

\noindent From the definition of our domain $D$, we obtain the inequalities

\begin{equation}
\label{domainIneq}
\widetilde{S_T} - [S_0]\geq 0, \quad \widetilde{S_T} - [S_2] \geq 0.
\end{equation}

From the hypothesis of our theorem, $a_1 + a_3 + a_4 + a_6 > 0$, which implies that $a_1, a_3, a_4,$ or $a_6$ is positive. Without loss of generality, let $a_1 > 0$. As reaction rate constants are non-negative, so are $a_3, a_4$, and $a_6$. Applying these assumptions with inequalities \eqref{subOldVarsBack} and \eqref{domainIneq} to equations \eqref{dfdxdgdy}, we find that $\frac{\partial f}{\partial [S_0]} < 0$ and $\frac{\partial g}{\partial [S_2]} \leq 0$. Thus, $H = \frac{\partial f}{\partial [S_0]} + \frac{\partial g}{\partial [S_2]} < 0$ and by Dulac's Criterion, the MM approximation of the composite system \eqref{reduced} cannot oscillate.
\end{proof}

The assumptions stated in Theorem \ref{mainRes1} are biologically reasonable; if $a_1$ were undefined, then $\ell_6 =0$, which would result in $S_2$ not being able to become $S_0$ in the composite network \eqref{chemRxn}. Similarly, $a_3, a_4,$ and $a_6$ are defined in biologically interesting networks. If $a_1 + a_3 + a_4 + a_6 = 0$, then $a_3 = 0$ which implies $k_1 = 0$ or $k_3+k_7 = 0$. Either implication would prevent $S_0$ from becoming $S_2$. In short, a network that fails the assumptions in Theorem $\ref{mainRes1}$ cannot oscillate.

With Theorem \ref{mainRes1} we obtain the following corollaries on a purely distributive network, a purely processive network, and another subnetwork known as the mixed-mechanism network, seen in \eqref{coro3}. First, we recover the earlier result that the MM approximation of the distributive system does not exhibit oscillations \citep{Wang2008, bozeMoral2016}.

\begin{Corollary}
The MM approximation of the distributive system \eqref{distChemRxn} does not admit oscillations.
\end{Corollary}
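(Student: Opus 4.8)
The plan is to realize the distributive network \eqref{distChemRxn} as the specialization of the composite network \eqref{chemRxn} obtained by setting $k_7 = \ell_7 = 0$, and then simply invoke Theorem \ref{mainRes1}. First I would observe that with $k_7 = \ell_7 = 0$ the Michaelis--Menten reduction of Section 2 goes through unchanged: the constants $k_7, \ell_7$ enter the reduced system \eqref{reduced} only through the reparametrized constants in \eqref{newVars}, and the denominators $k_2+k_3+k_7$ and $\ell_2+\ell_3+\ell_7$ in those expressions remain positive (they become $k_2+k_3$ and $\ell_2+\ell_3$). Hence \eqref{reduced}, evaluated at these parameter values, is precisely the MM approximation of \eqref{distChemRxn}.

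Next I would verify the hypothesis $a_1 + a_3 + a_4 + a_6 > 0$ of Theorem \ref{mainRes1} for these parameter values. Substituting $k_7 = \ell_7 = 0$ into \eqref{newVars} gives $a_1 = \ell_4\ell_6/(\ell_5+\ell_6)$, $a_3 = k_1k_3/(k_2+k_3)$, $a_4 = k_4k_6/(k_5+k_6)$, and $a_6 = \ell_1\ell_3/(\ell_2+\ell_3)$. In any distributive network in which each displayed reaction genuinely occurs, i.e. $k_1,\dots,k_6,\ell_1,\dots,\ell_6$ are all strictly positive, each of $a_1,a_3,a_4,a_6$ is strictly positive, so in particular $a_1+a_3+a_4+a_6>0$, and Theorem \ref{mainRes1} applies directly.

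The only point needing a word of care is the degenerate case, handled exactly as in the remark following Theorem \ref{mainRes1}: if $a_1+a_3+a_4+a_6 = 0$ then $a_3 = 0$, forcing $k_1 = 0$ or $k_3 = 0$, either of which prevents $S_0$ from ever becoming $S_2$, so the network is not a genuine dual-site phosphorylation system and lacks oscillations trivially. Thus I do not expect any real obstacle here; the corollary is essentially bookkeeping, namely checking that \eqref{distChemRxn} lies in the family covered by Theorem \ref{mainRes1} and that its reparametrized rate constants meet the positivity hypothesis.
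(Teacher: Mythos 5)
Your proposal is correct and follows essentially the same route as the paper: set $k_7=\ell_7=0$, check from \eqref{newVars} that $a_1+a_3+a_4+a_6>0$ when the remaining rate constants are positive, and apply Theorem \ref{mainRes1}. The extra remark about the degenerate case mirrors the paper's own discussion following Theorem \ref{mainRes1}, so nothing is missing.
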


\begin{proof} The distributive network \eqref{distChemRxn} is a subnetwork of the composite network \eqref{chemRxn} where the mass action constants $k_7$ and $\ell_7$ are set to $0$ while the others are positive. From equations \eqref{newVars}, we find that $a_1 + a_3 + a_4 + a_6 > 0$. Thus, by Theorem \ref{mainRes1} the system cannot oscillate. 
\end{proof}

Next, we show that the MM approximation of the processive system does not exhibit oscillations. This is to be expected from work by Eithun, Conradi, and Shiu, who showed that processive multisite phosphorylation networks do not exhibit oscillations \citep{processiveConverge, eithun2017}. As the original does not exhibit oscillations, it is reasonable to expect the MM approximation to not exhibit oscillations as well.

\begin{Corollary}
The MM approximation of the processive system \eqref{procChemRxn} does not admit oscillations.
\end{Corollary}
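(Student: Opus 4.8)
The plan is to mimic the proof of the previous corollary: realize the processive network \eqref{procChemRxn} as the special case of the composite network \eqref{chemRxn} in which the ``distributive middle'' rate constants vanish, namely $k_3 = k_4 = k_5 = \ell_3 = \ell_4 = \ell_5 = 0$, while all remaining constants $k_1, k_2, k_6, k_7, \ell_1, \ell_2, \ell_6, \ell_7$ are positive, and then check that the hypothesis $a_1 + a_3 + a_4 + a_6 > 0$ of Theorem \ref{mainRes1} still holds so that the theorem applies directly.

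Concretely, I would first note that the reduction in Section 2 producing the MM approximation \eqref{reduced} is carried out with symbolic rate constants, so substituting the processive values into \eqref{newVars} yields exactly the MM approximation of \eqref{procChemRxn}; there is nothing to redo in the reduction. Then I would evaluate the relevant $a_i$ from \eqref{newVars} under these substitutions. With $k_3 = 0$ we get $a_3 = \frac{k_1 k_7}{k_2 + k_7}$, which is positive since $k_1, k_7 > 0$ and $k_2 + k_7 > 0$; likewise $a_6 = \frac{\ell_1 \ell_7}{\ell_2 + \ell_7} > 0$. (One should also observe $a_4 = \frac{k_4 k_6}{k_5 + k_6} = 0$ and $a_1 = \frac{\ell_4 \ell_6}{\ell_5 + \ell_6} = 0$ here, but only one positive term is needed.) Hence $a_1 + a_3 + a_4 + a_6 = a_3 + a_6 > 0$, and Theorem \ref{mainRes1} immediately gives that the MM approximation of \eqref{procChemRxn} cannot oscillate.

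The only point requiring a moment's care is that setting $k_4 = k_5 = 0$ makes $c_1 = \frac{k_4}{k_5 + k_6} = 0$ (and similarly $c_2 = 0$), and one should confirm the denominators appearing in \eqref{reduced} and \eqref{dfdxdgdy} remain well-defined and positive on $D$ — indeed $b_1 = \frac{k_1(k_6 + k_7)}{(k_2 + k_7)k_6} > 0$ and $d_1 = \frac{\ell_1(\ell_6 + \ell_7)}{(\ell_2 + \ell_7)\ell_6} > 0$, so $1 + b_1[S_0] + c_1[S_1] \ge 1 > 0$ and $1 + c_2[S_1] + d_1[S_2] \ge 1 > 0$ throughout $D$, exactly as in the proof of Theorem \ref{mainRes1}. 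So there is really no obstacle here; the ``hard part'' is merely the bookkeeping of which constants vanish, and the corollary follows in a line or two once $a_3, a_6 > 0$ is recorded.
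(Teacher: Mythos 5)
Your proposal is correct and follows essentially the same route as the paper: identify the processive network as the specialization $k_3 = k_4 = k_5 = \ell_3 = \ell_4 = \ell_5 = 0$ of the composite network, verify from \eqref{newVars} that $a_1 + a_3 + a_4 + a_6 > 0$ (here via $a_3 = \tfrac{k_1 k_7}{k_2 + k_7} > 0$ and $a_6 = \tfrac{\ell_1 \ell_7}{\ell_2 + \ell_7} > 0$), and invoke Theorem \ref{mainRes1}. Your extra checks on the denominators are sound but not needed beyond what the theorem's proof already covers.
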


\begin{proof} The processive network \eqref{distChemRxn} is a subnetwork of the composite network \eqref{chemRxn} where the mass action constants $k_3, k_4, k_5, \ell_3, \ell_4$, and $\ell_5$ are set to $0$ while the others are positive. From equations \eqref{newVars}, we find that $a_1 + a_3 + a_4 + a_6 > 0$. Thus, by Theorem \ref{mainRes1} the system cannot oscillate.
\end{proof}

The next result concerns the following \textbf{mixed-mechanism} system. 

\begin{subequations}
\label{coro3}
\begin{equation}
  \label{coro3proc}
  S_0 + E \xrightleftharpoons[k_2]{k_1} S_0E \xrightarrow{k_7} S_1E \xrightarrow{k_6} S_2 + E
\end{equation}    
\begin{equation}
  \label{coro3dist}
  S_2 + F \xrightleftharpoons[\ell_2]{\ell_1} S_2F \xrightarrow{\ell_3} S_1 + F \xrightleftharpoons[\ell_5]{\ell_4} S_1F \xrightarrow{\ell_6} S_0 + F
\end{equation}
\end{subequations}

\noindent This mixed-mechanism network can be obtained from our original network \eqref{chemRxn} by setting $k_3 = k_4 = k_5 = \ell_7 = 0$. It is called the mixed-mechanism network since the phosphorylation \eqref{coro3proc} is processive while the dephosphorylation \eqref{coro3dist} is distributive.

\begin{Corollary}
\label{coro:mixed}
The MM approximation of the mixed-mechanism system \eqref{coro3} does not admit oscillations.
\end{Corollary}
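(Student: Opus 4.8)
The plan is to imitate the proofs of the two preceding corollaries: exhibit the mixed-mechanism system \eqref{coro3} as a specialization of the composite network \eqref{chemRxn}, verify that this specialization satisfies the hypothesis of Theorem \ref{mainRes1}, and then simply invoke that theorem.

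First I would record exactly which rate constants vanish. As already noted in the text, \eqref{coro3} is obtained from \eqref{chemRxn} by setting $k_3 = k_4 = k_5 = \ell_7 = 0$, with all the remaining rate constants $k_1, k_2, k_6, k_7, \ell_1, \ell_2, \ell_3, \ell_4, \ell_5, \ell_6$ strictly positive. Next I would substitute these values into the definitions \eqref{newVars} and check two things. First, that the Michaelis--Menten reduction of Section 2 still makes sense for this subnetwork: the denominators appearing in \eqref{reduced} are controlled by $b_1, c_1, c_2, d_1$, and with the above substitutions $b_1 = \frac{k_1(k_6+k_7)}{(k_2+k_7)k_6}$, $c_2 = \frac{\ell_4}{\ell_5+\ell_6}$, $d_1 = \frac{\ell_1}{\ell_2+\ell_3}$ are all finite and nonnegative (and $c_1 = 0$), so the derivation of \eqref{reduced} goes through verbatim. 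Second, that the hypothesis $a_1 + a_3 + a_4 + a_6 > 0$ holds: indeed $a_1 = \frac{\ell_4\ell_6}{\ell_5+\ell_6} > 0$ already suffices, and in fact $a_3 = \frac{k_1 k_7}{k_2+k_7} > 0$ and $a_6 = \frac{\ell_1\ell_3}{\ell_2+\ell_3} > 0$ as well, while $a_4 = 0$ since $k_4 = 0$.

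With the hypothesis verified, Theorem \ref{mainRes1} applies directly and yields that the Michaelis--Menten approximation of \eqref{coro3} cannot oscillate. I do not anticipate a genuine obstacle: this is a routine specialization argument, entirely parallel to the distributive and processive cases. The only point deserving a moment's care is confirming that setting $\ell_7 = 0$ together with $k_3 = k_4 = k_5 = 0$ does not cause any denominator in \eqref{newVars} or \eqref{reduced} to vanish, so that the system to which Theorem \ref{mainRes1} is applied really is the Michaelis--Menten approximation of the mixed-mechanism network; as shown above, this check is immediate.
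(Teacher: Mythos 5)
Your proposal is correct and follows essentially the same route as the paper: identify the vanishing rate constants $k_3 = k_4 = k_5 = \ell_7 = 0$, substitute into \eqref{newVars} to verify $a_1 + a_3 + a_4 + a_6 > 0$, and invoke Theorem \ref{mainRes1}. Your additional check that no denominators vanish is a reasonable extra precaution but not something the paper spells out.
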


\begin{proof} In \eqref{coro3}, the mass action constants $k_3, k_4, k_5$, and $\ell_7$ are set to $0$ while the others are positive. From equations \eqref{newVars}, we find that $a_1 + a_3 + a_4 + a_6 > 0$. Thus, by Theorem \ref{mainRes1} the system cannot oscillate.
\end{proof}

Corollary \ref{coro:mixed} is of particular interest because Suwanmajo and Krishnan showed that the mixed-mechanism network does exhibit oscillations \cite{mixedNet}, even though Corollary \ref{coro:mixed} states that its MM approximation does \textit{not} admit oscillations. This serves as a good reminder that the MM approximation is an approximation, and information is lost in exchange for ease of algebra. Rao also obtains Corollary \ref{coro:mixed} in independent work and uses it with the Poincare-Bendixson theorem to show that the MM approximation of the mixed-mechanism system is asymptotically stable \cite{Rao2018}.

\section{Discussion}
The goal of this paper was to investigate oscillations in the composite network, a general sequential dual-site phosphorylation network with both processive and distributive elements. The composite system was of interest for two reasons. The first was that post-translational modification networks in biochemistry are generally neither purely distributive nor purely processive. The second was that any results for the composite network subsumes results for the many subnetworks. 

In this work, we precluded oscillations in the MM approximation of the composite system. Using this result, we confirmed the lack of oscillations in the MM approximation of the distributive system and processive system and also showed that the MM approximation of the mixed-mechanism system exhibits no oscillations, contrary to the original mixed-mechanism system \cite{mixedNet}. 

The natural next question is to increase the amount of phosphorylation sites and attempt to prove or disprove the existence of oscillations. Naturally, more phosphorylation sites results in more networks. Instead of having to work each network out, analyzing a composite network that contains all $n$-site networks may yield a result that subsumes results for all subnetworks. If attempting to preclude oscillations, tools other than Dulac's Criterion, which is applicable only for $2-$dimensional systems, will be needed. Promising tools include monotone systems theory and Lyapunov functions. Monotone systems theory was employed by Wang and Sontag to preclude oscillations in the MM approximation of the dual-site distributive system, and could be used on $n$-site networks as well \cite{Wang2008}. Rao found success using Lyapunov functions to preclude oscillations in special cases of the MM approximation of the $n$-site mixed-mechanism system \cite{Rao2018}.

Proving or precluding the existence of oscillations in the MM approximation of $n$-site networks will affect efforts to do so in the original. Since features of the MM approximation can often be carried up to the original (see Hell and Rendall \cite{HELL2015175}), to rule out oscillations in the MM approximation would also rule out a method to prove the existence of oscillations in the original. Should the existence of oscillations for the MM approximation be proven instead, we would likely be able to show the existence of oscillations in the original.

\section*{Acknowledgements}
This project would not have been possible without the mentorship of Dr. Anne Shiu and the support of Jonathan Tyler. The author also thanks the two anonymous referees whose detailed feedback have improved this work. The author is particularly grateful to one of the reviewers who supplied a proof that strengthened Theorem \ref{mainRes1}. This research was conducted with funding from the NSF (DMS-1460766) at the REU program of Texas A\&M University, Summer 2017.


\bibliographystyle{unsrtnat}
\bibliography{biblio.bib}

\end{document}